\theoremstyle{plain}
\newtheorem{thm}{\protect\theoremname}
\theoremstyle{plain}
\newtheorem{prop}[thm]{\protect\propositionname}
\newcommand{\nstaticallart}{4650}
\newcommand{\nstaticreg}{240615}
\newcommand{\nstaticregart}{4420}
\newcommand{\sharehet}{17.1}
\newcommand{\sharecluster}{61}
\newcommand{\shareiid}{21.6}
\newcommand{\sharehcone}{98.1}
\newcommand{\sharehione}{15.6}
\newcommand{\ncoef}{3280}
\newcommand{\nreg}{608}
\newcommand{\nart}{155}
\newcommand{\ncoefboot}{1371}
\providecommand{\propositionname}{Proposition}
\providecommand{\theoremname}{Theorem}
\begin{document}
\title{From Replications to Revelations: Heteroskedasticity-Robust Inference}
\author{Sebastian Kranz, Ulm University\thanks{Special thanks to Lars Vilhuber, Ben Greiner and all other data editors:
without your awesome work, studies like this would not be possible.
Also many thanks to Michal Kolesar, James MacKinnon, Enrique Pinzone
and Michael Vogt for great discussions.}}
\date{December 2024\\
(First version November 2024)}
\maketitle
\begin{abstract}
Analysing the Stata regression commands from \nstaticregart\ reproduction
packages of leading economic journals, we find that, among the 40,571
regressions specifying heteroskedasticity-robust standard errors,
\sharehcone\% adhere to Stata's default HC1 specification. We then
compare several heteroskedasticity-robust inference methods with a
large-scale Monte Carlo study based on regressions from \nart\ reproduction
packages. Our results show that t-tests based on HC1 or HC2 with default
degrees of freedom exhibit substantial over-rejection. Inference methods
with customized degrees of freedom, as proposed by Bell and McCaffrey
(2002), Hansen (2024), and a novel approach based on partial leverages,
perform best. Additionally, we provide deeper insights into the role
of leverages and partial leverages across different inference methods.

\end{abstract}

\section{Motivation and Basic Insights}

A considerable body of literature has proposed and recommended different
specifications for heteroskedasticity-robust inference. For instance,
based on Monte Carlo evidence, Long and Ervin (2000) strongly recommend
for sample sizes below 250 observations HC3 standard errors rather
than Stata's robust default: HC1.\footnote{Similar recommendations to adopt more robust inference methods for
smaller samples have been made e.g. by MacKinnon and White (1985),
who introduced HC1, HC2 and HC3, Chesher and Jewitt (1987), Chesher
and Austin (1991), and Cattaneo, Jansson, and Newey (2018).}

Is this recommendation widely followed in empirical practice? No,
the opposite holds true. We analyse the code of regression commands
found in the Stata scripts of \nstaticregart\ reproduction packages
from leading economic journals. 40,571 regression commands specify
heteroskedasticity-robust standard errors and \sharehcone\% stick
with HC1 standard errors. Appendix A provides more details.

Would inference results substantially differ if the recommendation
were more widely followed? We proceed with a subset of \nreg\  regressions
from {\small{}\nart\ } reproduction packages that have fewer than
1000 observations and can be reproduced using the current version
of the toolbox \textit{repbox}.\footnote{The repbox tool chain consists of a series of R packages that I am
developing to facilitate and semi-automate methodological meta-studies.
This paper constitutes my first application of that toolchain. The
toolchain automatically reproduces Stata code found in reproduction
packages, performing essential tasks such as automatic file path correction.
Additionally, it systematically stores information from regression
commands and the underlying data sets, enabling the replication and
modification of these regressions in both Stata and R. The general
endeavor is complex and remains a work in progress. Currently, repbox
does not yet robustly work for all reproduction packages that are
theoretically reproducible. Thorough development and documentation
will take a lot more time. Another part of repbox and area of ongoing
work, which is not yet used in this paper, is the automatic mapping
of regressions from reproduction packages to the regression tables
displayed in the corresponding articles. Thankfully, the Deutsche
Forschungsgemeinschaft (DFG) supports future work on repbox as part
of the larger SocEnRep project where it will also benefit from input
from colleagues from computer sciences and social sciences.} Appendix B provides details on the sample selection.

For Table 1 we explore significance tests for the null hypothesis
that a true regression coefficient $\beta_{k}$ is zero. We run each
significance test twice: once with HC1 standard errors and once with
HC3 standard errors. HC3 standard errors and p-values are always larger
than their HC1 counterparts. Table 1 only considers the subsample
of the tests for which the HC1 p-value is below 5\% and shows for
different intervals of that p-value the fraction of tests that are
no longer significant at a 5\% level if one uses HC3 standard errors
instead.

\begin{table}
\caption{Significance at 5\% level for tests based on HC1 and HC3 robust standard
errors}

\begin{centering}
\begin{tabular}{lcccc}
\hline
\hline

Interval of & \multicolumn{2}{c}{$n \leq 250$} & \multicolumn{2}{c}{$250 < n \leq 1000$} \\
HC1 p-value & No. tests & Share HC3 $p \leq 0.05$ & No. tests & Share HC3 $p \leq 0.05$ \\

\hline
\\
(0,0.01]
  & 398 & 96.5\% & 253 &99.2\% \\
(0.01,0.02]
  & 73 & 82.2\% & 30 &96.7\% \\
(0.02,0.03]
  & 56 & 58.9\% & 34 &70.6\% \\
(0.03,0.04]
  & 54 & 29.6\% & 17 &58.8\% \\
(0.04,0.05]
  & 37 & 8.1\% & 22 &9.1\% \\

\hline
\end{tabular}
\par\end{centering}
\medskip{}

\textit{\small{}Note: }{\small{}The table includes only those significance
tests from the \nreg\  original regressions from \nart\  reproduction
packages whose p-values under HC1 standard errors were below 5\%.
For each p-value interval, it shows the number of these tests and
the fraction that remain statistically significant at the 5\% level
when HC3 standard errors are used instead. The first two columns show
results for regression with sample size $n\le250$ and the remaining
two columns for regressions with $250<n\leq1000$.}{\small\par}
\end{table}

The results show that the switch from HC1 to HC3 has substantial effects
on statistical significance. For instance, fewer than 10\% of HC1-based
tests with p-values between 4\% and 5\% remain significant when using
HC3. This result holds even for regressions with sample sizes ranging
from 250 to 1,000 observations.

While these results are insightful, several important questions remain.
For instance, to what extent are HC1-based p-values excessively low,
and to what extent are HC3-based p-values overly conservative? Beyond
HC3 standard errors, alternative methods for improving robust inference
have been proposed. These include HC4 standard errors, introduced
by Cribari-Neto (2004); HC2 standard errors with alternative calculations
of degrees of freedom for the t-test, suggested by Bell and McCaffrey
(2002) and Imbens and Kolesar (2016); Hansen's (2024) modified jackknife
estimator, which also extend the degrees of freedom adjustments proposed
by Bell and McCaffrey (2002); and wild bootstrap inference, as proposed
e.g. by Wu (1986) and Roodman et al. (2019).

Do some of these methods perform better than others across typical
situations encountered in economic analyses? How heterogeneous is
a method's performance across different situations? Do applied researchers
encounter scenarios where these established approaches systematically
fail, and novel methods offer meaningful improvements?

To address these questions, we employ a large-scale Monte Carlo study.
Monte Carlo simulations are a common tool in research on robust inference,
but they typically examine only a small set of regression specifications,
usually not based on real world data sets.\footnote{A notable exception is Young (2022), who conducts a large-scale Monte
Carlo study based on instrumental variable regressions extracted from
30 reproduction packages of economic articles. Already, in his previous
article, Young (2019), hand-collected reproducible regressions from
a large set of 53 reproduction packages, but did not yet base the
Monte Carlo simulations on those regressions. } We perform Monte Carlo studies based on a large set of \nreg\ OLS
regressions originally conducted in \nart\ different reproduction
packages of published economic articles. Ideally, our approach offers
insights that are representative of the situations typically encountered
by empirical researchers. Additionally, this broad scope enables us
to investigate heterogeneity by examining how the performance of robust
inference methods varies across the different original regression
specifications.

In a nutshell, for each original regression of the form
\begin{equation}
y^{o}=X\beta^{o}+\varepsilon^{o},
\end{equation}
we specify a custom data generating process
\begin{equation}
y=X\beta+\varepsilon
\end{equation}
with the same $n\times K$ matrix of explanatory variables $X$ as
in the original regression.%
{} The true coefficients, $\beta$, are set to zero. In line with the
original researchers' assumptions, error terms are heteroskedastic.
Concretely, we assume $\varepsilon_{i}\sim N(0,\sigma_{i}^{2})$ for
each observation $i=1,\ldots,n$. The specification of the standard
deviations, $\sigma_{i}$, of the error terms, $\varepsilon_{i}$,
is a detailed process performed separately for each original regression.
Initially, multiple candidate FGLS specifications, constructed using
random forests, are estimated and calibrated. Subsequently, one candidate
is selected by comparing the moments of the original OLS residuals
with those of the OLS residuals obtained from Monte Carlo simulations
of the various candidates. Details are provided in Appendix C.

For each original regression, we draw $M=10,000$ Monte Carlo samples
and compute the p-values for a t-test of the null hypothesis $\beta_{k}=0$
for up to 25 coefficients, $\beta_{k}$, per regression. Coefficients
of fixed effects dummies are not tested. Each of the \ncoef\ tested
coefficients from the \nreg\ original regressions constitutes a distinct
\textit{test situation}, indexed by $s$.

For each test situation, we compare different test specifications
$\tau\in\{\text{IID, HC1, HC2, \ensuremath{\ldots}}\}$, which vary
by the type of standard error and the specification of the degrees
of freedom used in the t-distribution. Mathematical background on
the different specifications is provided in Section 2.

Let $p_{\tau,s}(m)$ denote the realized p-value for Monte Carlo sample
$m=1,\ldots,M$ in specification $\tau$ and test situation $s$.
Our analysis focuses on the 5\% significance level. The simulated
rejection rate is defined as the proportion of Monte Carlo samples
for which the p-value is below 0.05:
\begin{equation}
\pi_{\tau,s}^{0.05}=\frac{1}{M}\sum_{m=1}^{M}I(p_{\tau,s}(m)\leq0.05)
\end{equation}
where $I(\cdot)$ is the indicator function. Since the null hypothesis
is true in all test situations, p-values should be uniformly distributed
under a correctly specified t-test. Consequently, the ideal value
of the rejection rate $\pi_{\tau,s}^{0.05}$ is 0.05.

We measure deviations from this ideal value using the excess and lack
of the rejection rate, defined as:
\begin{equation}
\text{excess}_{\tau,s}=\max\left(\pi_{\tau,s}^{0.05}-0.05,0\right),
\end{equation}
\begin{equation}
\text{lack}_{\tau,s}=\max\left(0.05-\pi_{\tau,s}^{0.05},0\right).
\end{equation}

While an excessive rejection rate increases the risk of false discoveries,
lack in rejection rates can lead to under-powered significance tests.
Excess is generally regarded as more problematic than an equally high
lack. However, opinions may differ regarding acceptable levels of
excess and the degree of lack one is willing to tolerate for a given
reduction in excess.

\begin{figure}
\begin{centering}
\includegraphics[scale=0.9]{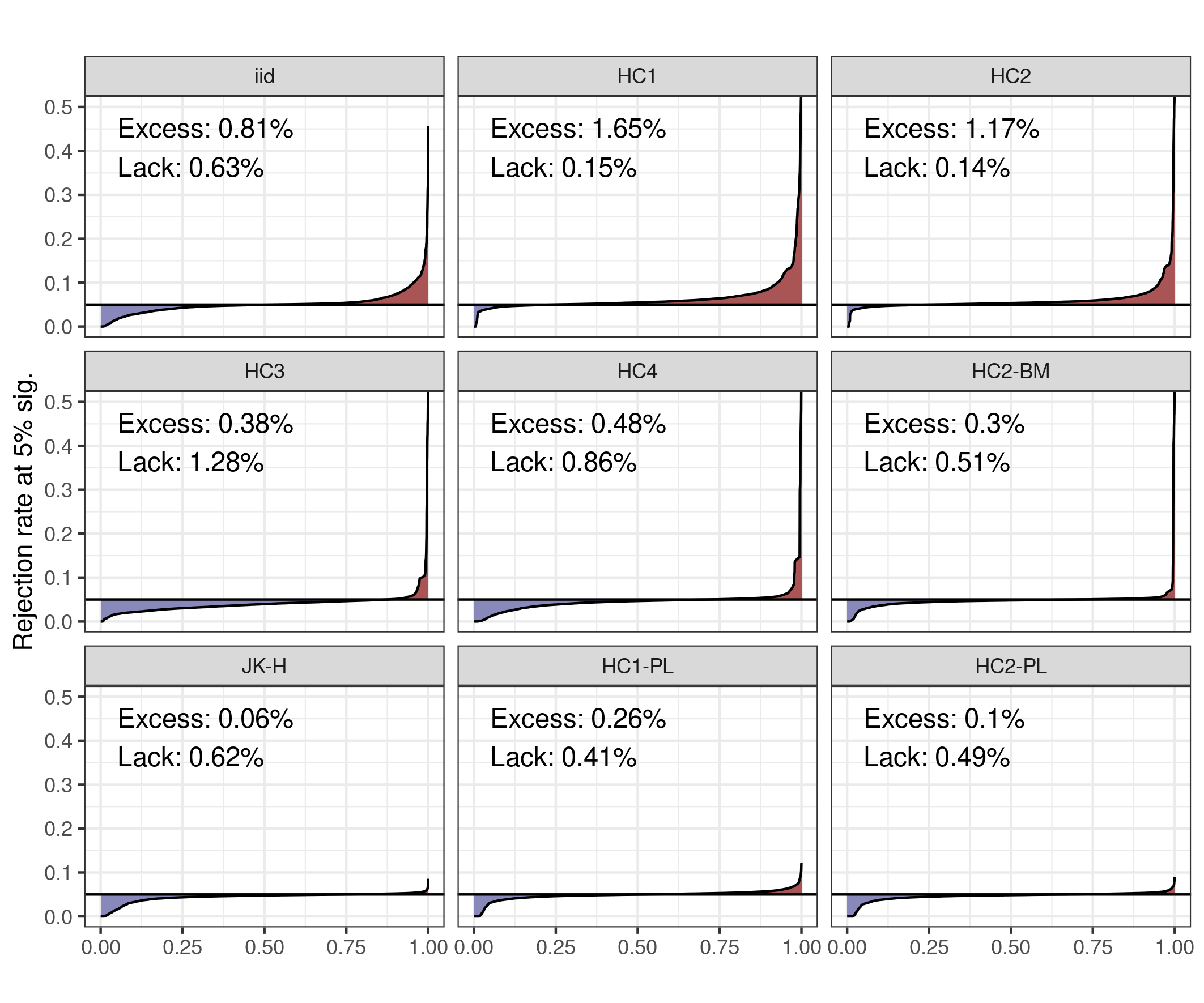}
\par\end{centering}
\caption{Core Results of Monte Carlo Study}
\medskip{}

\textit{\small{}Note: }{\small{}Each pane shows for a different specification
of standard errors and degrees of freedom the distribution of rejection
rate of t-tests with a 5\% significance level across \ncoef\ different
regression coefficients from \nreg\ regressions taken from \nart\ different
reproduction packages. Red areas correspond to regression coefficients
with excessive rejection rates (above 5\%) and blue areas to those
with lacking rejection rates (below 5\%). For each specification the
average excess and lack of the rejection rates across all regression
coefficients is reported.}{\small\par}
\end{figure}

Figure 1 shows for each specification $\tau$ the average excess and
lack and their distribution across all \ncoef\ test situations.
Consistent with conventional wisdom, average excess decreases when
moving in order from HC1, HC2, HC4, to HC3 standard errors, while
average lack correspondingly increases.

More surprisingly, in our sample of regressions with no more than
1000 observations, both HC1 and HC2 yield on average more excessive
rejection rates than inference based on i.i.d. standard errors, which
is consistent only under homoskedasticity.\footnote{Thus, adding the \textit{robust} option to a Stata \textit{regress}
command, such that HC1 standard errors are used, may, in smaller sample
sizes, misleadingly suggest that the resulting standard errors and
test results are more conservative than without the \textit{robust}
option.} While HC3 is most conservative in the sense of having the largest
average lack in rejection rates, it is not the specification with
the lowest average excess.

To partially order specifications, we say that one specification outperforms
another on average if it has a lower weighted sum of average excess
and average lack if the weight on excess is at least as large as the
weight on lack.

We study four specifications employing alternative degrees-of-freedom
adjustments. HC2-BM is based on the methods of Bell and McCaffrey
(2002) and Imbens and Kolesar (2016), JK-H refers to the jackknife
estimator proposed by Hansen (2024), and HC1-PL and HC2-PL implement
a novel partial leverage-based degrees-of-freedom adjustment introduced
in this paper. Each of these methods outperforms, on average, the
HC1 to HC4 specifications that rely on the default $n-K$ degrees
of freedom in the t-tests. This result highlights the importance of
correctly specifying degrees of freedom.

Among all methods considered, JK-H and HC2-PL outperform all others
on average, with no clear ranking between them. Both exhibit very
small average excess in their rejection rates of 0.06\% and 0.1\%,
respectively. For comparison, a simulation study assuming uniformly
distributed p-values yields an average excess in the rejection rates
of 0.09\%, which is attributable to noise in the Monte Carlo simulations.

Each pane of Figure~1 also plots the rejection rates $\pi_{\tau,s}^{0.05}$
of all \ncoef\ test situations, arranged in increasing order with
the quantile level on the x-axis. Blue and red shaded areas represent
rejection rates $\pi_{\tau,s}^{0.05}$ associated with positive lack
and excess, respectively. The average lack and excess for specification
$\tau$ correspond to the total size of the respective blue and red
areas.

There is substantial heterogeneity. No specification $\tau$ exhibits
uniformly excessive or uniformly lacking rejection rates across all
test situations. As shown on the left-hand side of each pane, all
specifications have rejection rates close to zero in some test situations.
Conversely, except for JK-H, HC1-PL and HC2-PL, all specifications
include a few test situations with extremely high rejection rates
exceeding 50\%.

\begin{figure}
\begin{centering}
\includegraphics[scale=0.9]{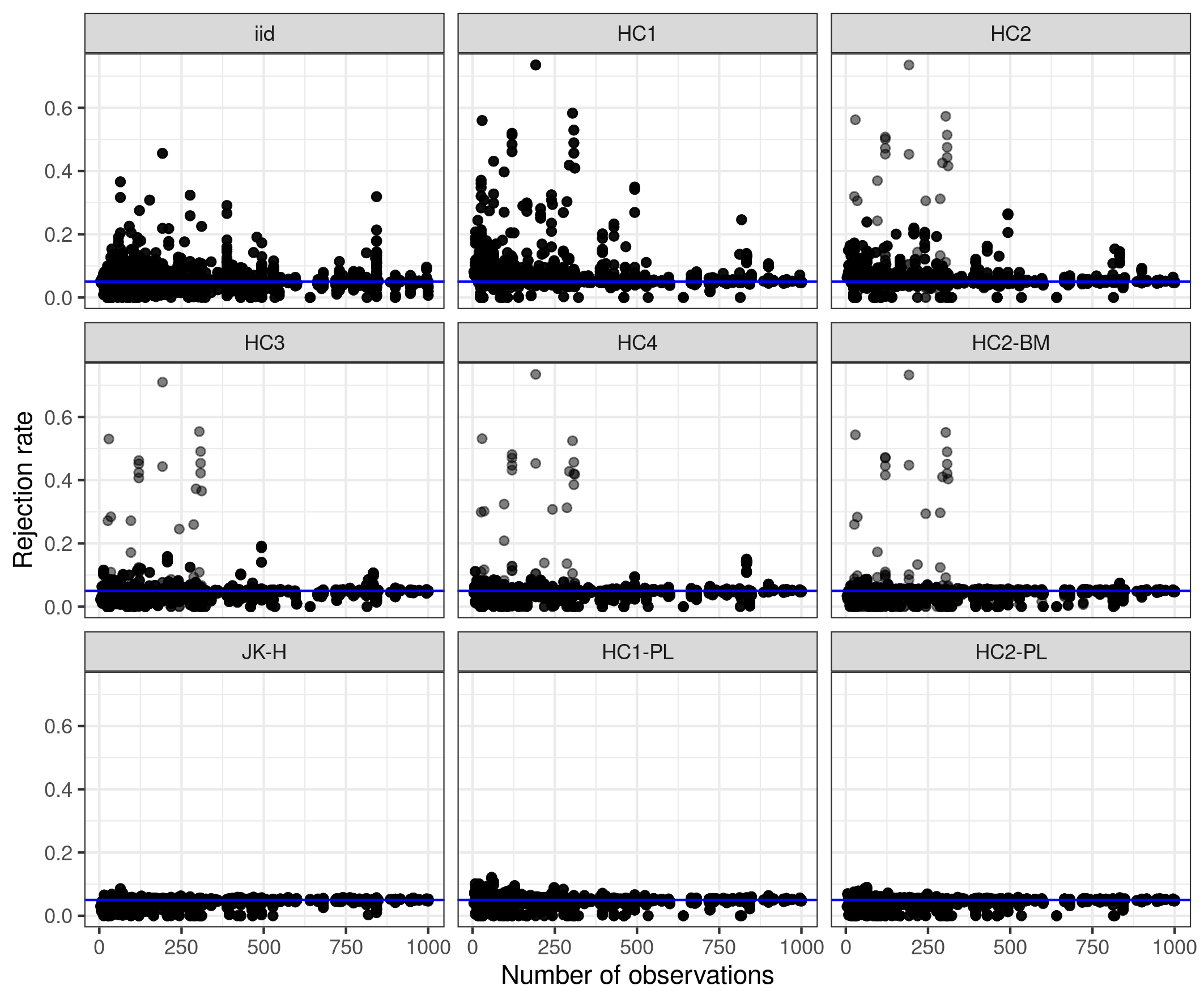}
\par\end{centering}
\caption{Rejection rates at 5\% level against sample size}

\textit{\small{}\medskip{}
}{\small\par}
\end{figure}

Figure 2 plots the rejection rates against the sample size of each
test situation. While highly excessive rejection rates are more likely
for smaller sample sizes, we find them also in test situations with
moderate sample sizes.

The next section provides mathematical background on robust inference
and deeper insights into the role of partial leverages and leverages.
We will also pin down the subset of test situations that have highly
excessive rejection rates for specifications HC2-HC4 and HC2-BM and
propose remedies. Section 3 then briefly concludes with additional
observations, like the extension of some insights to cluster robust
standard errors.

Due to the significant computational demands, wild bootstrap methods
are analyzed for only a subset of test situation, with the corresponding
results presented in Appendix D. While wild bootstrap specifications
outperform the conventional HC1 and HC2 specifications, they are outperformed
by the HC2-BM, JK-H, HC1-PL, and HC2-PL specifications that utilize
customized degrees of freedom.

\section{Mathematical Background and Deeper Insights}

This section provides theoretical background on heteroskedasticity-robust
inference and yields deeper insights into the role of partial leverages
and leverages. Consider the linear regression model:
\begin{equation}
y=X\beta+\varepsilon,\label{eq:reg_long}
\end{equation}
where $y$ is the dependent variable, $X$ is the matrix of explanatory
variables, $\beta$ is the vector of coefficients. The error terms
$\varepsilon_{i}$ are independently distributed with mean zero and
variance $\sigma_{i}^{2}$.

Using the Frisch-Waugh-Lovell (FWL) theorem, coefficient $\beta_{k}$
can be estimated with the simpler model:
\begin{equation}
\tilde{y}_{k}=\beta_{k}\tilde{x}_{k}+\tilde{\varepsilon}_{k},\label{eq:reg_fwl}
\end{equation}
where $\tilde{y}_{k}$ and $\tilde{x}_{k}$ are the residuals from
regressing $y$ and $x_{k}$, respectively, on all other explanatory
variables in $X$ except for $x_{k}$. Estimation of model (\ref{eq:reg_fwl})
yields the same OLS residuals $\hat{\varepsilon}$ and estimator $\hat{\beta}_{k}$
as the original model (\ref{eq:reg_long}).\footnote{The FWL representation also enhances computational efficiency of our
Monte Carlo simulations. While certain elements, in particular, the
leverages $h$ (introduced further below), cannot be derived from
the FWL representation, these elements need to be computed only once
per original regression. Since calculation of a full-blown $K\times K$
variance-covariance matrix can be avoided, the FWL representation
considerably accelerates computations that must be repeated for each
Monte Carlo sample.}

The partial leverage for observation $i$ with respect to explanatory
variable $x_{k}$ is defined as:
\begin{equation}
\tilde{h}_{k,i}=\frac{\tilde{x}_{k,i}^{2}}{\sum_{j=1}^{n}\tilde{x}_{k,j}^{2}},
\end{equation}
where $\tilde{x}_{k,i}$ is the $i$-th element of $\tilde{x}_{k}$.
Partial leverages satisfy $0\leq\tilde{h}_{k,i}\leq1$ and $\sum_{i=1}^{n}\tilde{h}_{k,i}=1$.\footnote{Furthermore, for $n\geq2$, we have $\tilde{h}_{k,i}<1$ if the original
regression model includes a constant.}

The FWL representation yields a simple formula for the true variance
of $\hat{\beta}_{k}$:
\begin{equation}
\text{Var}(\hat{\beta}_{k})=\frac{\sum_{i=1}^{n}\sigma_{i}^{2}\tilde{x}_{k,i}^{2}}{\left(\sum_{j=1}^{n}\tilde{x}_{k,j}^{2}\right)^{2}}=\frac{\sum_{i=1}^{n}\sigma_{i}^{2}\tilde{h}_{k,i}}{\sum_{j=1}^{n}\tilde{x}_{k,j}^{2}}.\label{eq:var_beta_k}
\end{equation}

The impact of error variance $\sigma_{i}^{2}$ on the variance of
$\hat{\beta}_{k}$ is thus proportional to the partial leverage $\tilde{h}_{k,i}$.

Heteroskedasticity-robust variance estimators for $\hat{\beta}_{k}$
of types HC0 to HC4 can all be expressed in the general form:
\begin{equation}
\hat{V}_{k}^{\tau}=\frac{\sum_{i=1}^{n}\left(\hat{\sigma}_{i}^{\tau}\right)^{2}\tilde{h}_{k,i}}{\sum_{j=1}^{n}\tilde{x}_{k,j}^{2}},\label{eq:V_beta_k}
\end{equation}
with
\begin{equation}
\hat{\sigma}_{i}^{\tau}=\sqrt{\alpha_{i}^{\tau}\hat{\varepsilon}_{i}^{2}},
\end{equation}
where $\hat{\varepsilon}_{i}$ is the OLS residual for observation
$i$, and $\alpha_{i}^{\tau}$ is an adjustment factor that depends
on the specification $\tau$.

Intuitively, the term $\hat{\sigma}_{i}^{\tau}$ estimates the unknown
standard deviation $\sigma_{i}$ of $\varepsilon_{i}$. Of course,
being based on a single residual, $\hat{\sigma}_{i}^{\tau}$ will
never consistently estimate $\sigma_{i}$. Yet, an important insight
of Eicker (1967), Huber (1967), and White (1980) is that, usually,
one can still consistently estimate the (co-)variances of $\hat{\beta}$.
Nowadays, their proposed estimator is often referred as HC0: it sets
$\alpha_{i}^{HC0}=1$. While consistent, HC0 can exhibit severe bias
in small samples and is rarely used in practice. MacKinnon and White
(1985) introduced three variants, HC1, HC2 and HC3, with improved
small-sample properties.

The HC1 adjustment factor is:
\begin{equation}
\alpha_{i}^{HC1}=\frac{n}{n-K}.
\end{equation}
This correction mirrors the degrees of freedom adjustment used in
the unbiased estimator $\hat{\sigma}^{2}=\frac{1}{n-K}\sum_{i=1}^{n}\hat{\varepsilon}_{i}^{2}$
for the error variance $\sigma^{2}=E[\varepsilon_{i}^{2}]$ under
homoskedasticity.

The HC2 adjustment factor is given by:
\begin{equation}
\alpha_{i}^{HC2}=\frac{1}{1-h_{i}},
\end{equation}
where $h_{i}$ is the \textit{leverage} of observation $i$, defined
as the $i$-th diagonal element of the hat matrix:
\begin{equation}
H=X(X^{\top}X)^{-1}X^{\top}.
\end{equation}

Leverages satisfy $0\leq h_{i}\leq1$ and $\sum_{i=1}^{n}h_{i}=K$.
HC2 is motivated by the property that, under homoskedasticity, $E[\hat{\varepsilon}_{i}^{2}]=(1-h_{i})\sigma^{2}$.
Furthermore, given certain regularity conditions Pustejovsky and Tipton
(2017) show that HC2 constitutes an unbiased variance estimator.

The HC3 adjustment factor is also based on leverages:
\begin{equation}
\alpha_{i}^{HC3}=\frac{1}{(1-h_{i})^{2}}.
\end{equation}
As shown in Hansen (2022), HC3-adjusted residuals $\sqrt{\alpha_{i}^{HC3}}\hat{\varepsilon}_{i}$
are equivalent to the leave-one-out prediction error $y_{i}-x_{i}\hat{\beta}^{(i)}$,
where $\hat{\beta}^{(i)}$ denotes the OLS estimator obtained from
the regression excluding observation $i$. The HC3 estimator can also
be interpreted as a jackknife estimator, satisfying\footnote{See MacKinnon et al. (2023) for a derivation. MacKinnon and White
(1985) introduced the jackknife formulation
\[
\hat{V}_{k}^{JK}=\frac{N-1}{N}\sum_{i=1}^{N}(\hat{\beta}_{k}^{(i)}-\bar{\beta}_{k}^{JK})^{2}
\]
as the HC3 variance estimator where $\bar{\beta}_{k}^{JK}$ denotes
the mean of the leave-one-out estimators $\hat{\beta}\halfnote_{k}^{(i)}$.
The current HC3 formulation was popularized by Davidson and MacKinnon
(1993).}
\begin{equation}
\hat{V}_{k}^{JK}=\sum_{i=1}^{N}(\hat{\beta}_{k}^{(i)}-\hat{\beta}_{k})^{2}.\label{eq:jk}
\end{equation}

Hansen (2024) adapts the jackknife estimator (\ref{eq:jk}) by employing
generalized Moore-Penrose inverses in cases an $\hat{\beta}^{(i)}$
cannot otherwise be computed. The HC4 estimator, introduced by Cribari-Neto
(2004), aims to better handle cases with high leverages by modifying
the adjustment factor to:
\begin{equation}
\alpha_{i}^{HC4}=\frac{1}{(1-h_{i})^{\delta}},
\end{equation}
where $\delta_{i}=\min\left\{ 4,\frac{nh_{i}}{K}\right\} $.

\subsection*{Partial leverage adjusted degrees of freedom}

In our specifications HC1 to HC4 we follow Stata and other software
packages and use $n-K$ degrees of freedom in the $t$-tests. Yet,
it is evident from (\ref{eq:V_beta_k}) that if partial leverages
are concentrated in a few observations, the standard errors are typically
imprecisely estimated as they depend strongly on the residuals of
those few observations. Choosing simply $n-K$ degrees of freedom
in the $t$-test does not account for such imprecisions arising from
highly concentrated partial leverages.

The idea behind our newly proposed methods HC1-PL and HC2-PL is to
base the degree of freedoms in the $t$-test on the concentration
of partial leverages. A widely used concentration measure in competition
policy is the Herfindahl-Hirschman index, defined as the sum of the
squared market shares of all competitors in a market. Similar to market
shares, partial leverages are non-negative and add up to 1. Thus the
concentration of partial leverages can be measured by the sum of squared
partial leverages. We denote the inverse of this measure as:

\begin{equation}
\tilde{n}_{k}=\left(\sum_{i=1}^{n}\tilde{h}_{k,i}^{2}\right)^{-1}
\end{equation}

We refer to $\tilde{n}_{k}$ as the partial-leverage-adjusted sample
size. It satisfies $1\leq\tilde{n}_{k}\leq n$. If all observations
have the same partial leverage then $\tilde{n}_{k}=n$, while $\tilde{n}_{k}=1$
in the limit case that the partial leverage becomes concentrated in
a single observation.

For HC1-PL and HC2-PL we specify the degrees of freedom in the $t$-test
as $\tilde{n}_{k}-1$ and use HC1 and HC2, respectively, as standard
errors. Appendix~E provides a justification for this partial-leverage-based
degrees of freedom by deriving it from a Satterthwaite approximation,
similar to the approach of Bell and McCaffrey (2002).

To understand why we specify the degrees of freedom as $\tilde{n}_{k}-1$
instead of $\tilde{n}_{k}$, consider the following. As long as the
original sample contains at least two observations, we have $\tilde{n}_{k}>1$.
Since a $t$-distribution can be defined for any fractional degrees
of freedom strictly greater than zero, this adjustment ensures properly
defined degrees of freedom whenever $n\geq2$. In the degenerate case
of a single observation, no standard error can be computed, and the
corresponding $t$-distribution with zero degrees of freedom becomes
degenerate. Using $\tilde{n}_{k}-1$ as the degrees of freedom allows
for continuous convergence to this degenerate case as $\tilde{n}_{k}\to1$.

Figure~\ref{fig:rejection_vs_pl} provides additional the empirical
support for our proposal. For the standard HC1-HC4 specifications,
as well as for specification HC2-BM: all test situations with highly
excessive rejection rates exhibit very small partial leverage adjusted
sample sizes. By imposing very wide t-distributions in those cases,
HC1-PL and HC2-PL effectively mitigate those excessive rejection rates.
Hansen's (2004) jackknife estimator also robustly handles cases with
small partial leverage.

\begin{figure}
\begin{centering}
\includegraphics[scale=0.9]{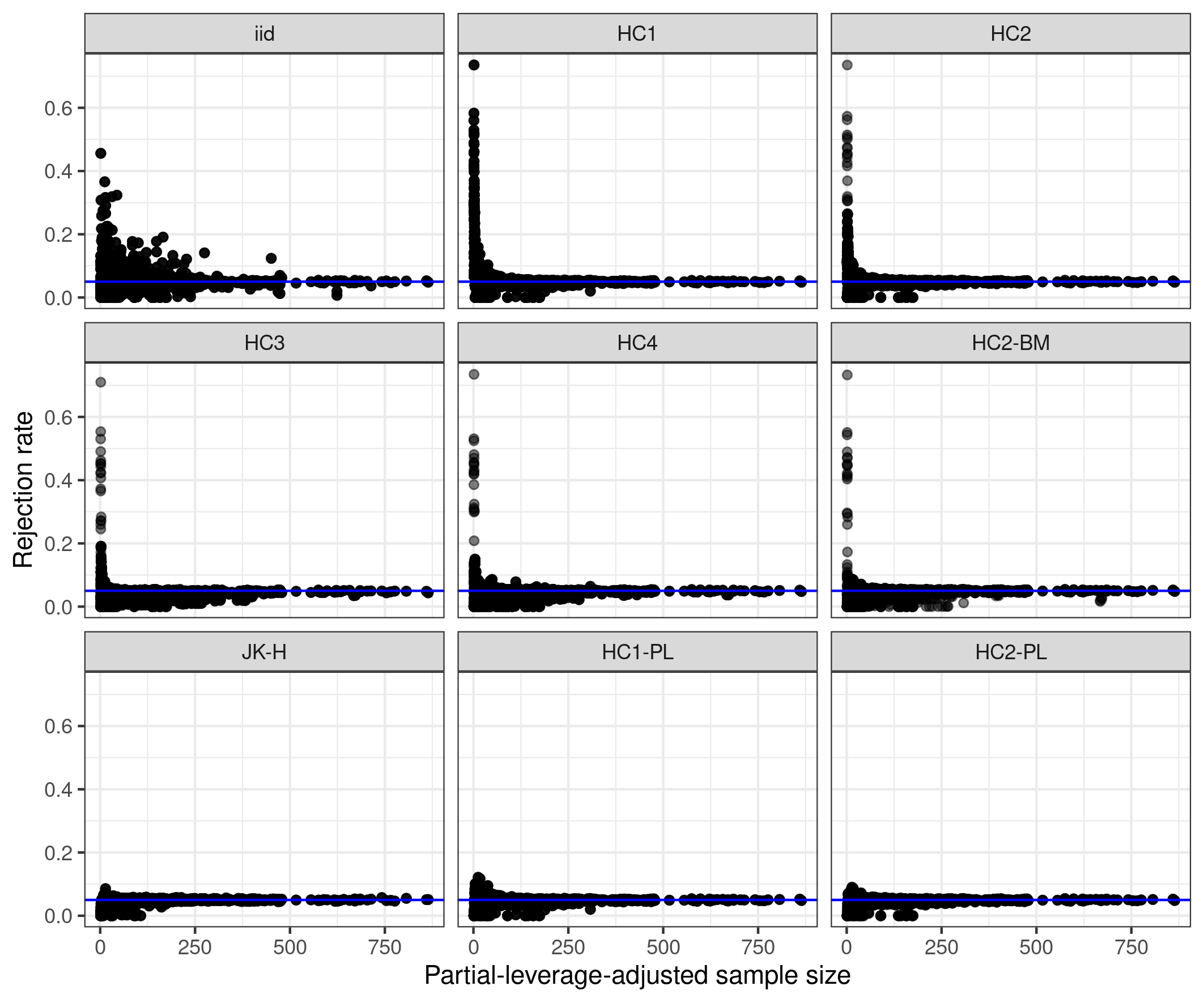}
\par\end{centering}
\caption{\label{fig:rejection_vs_pl}Rejection rates at 5\% level against partial-leverage-adjusted
sample sizes}

\textit{\small{}\medskip{}
}{\small\par}
\end{figure}

\subsection*{Full leverage and problematic regressions}

The insights discussed thus far do not yet provide a complete picture.
We now demonstrate that excessive rejection rates are particularly
pronounced in a subset of regressions involving observations with
a leverage $h_{i}$ of one and high partial leverages. We refer to
a leverage value of one as \textit{full leverage}. For cases with
full leverage, we propose a modification to the currently implemented
methods for computing standard errors.

Indeed, \sharehione\% of the regressions in our replication sample
include at least one observation with full leverage. Full leverage
occurs whenever one or several explanatory variables can perfectly
predict the corresponding observation $i$ such that $\hat{\varepsilon}_{i}=0$.
In that case $\hat{\varepsilon}_{i}$ provides no information about
$\sigma_{i}$. Our definitions above for the HC2, HC3, and HC4 variance
estimators are then not well-defined as $\hat{\sigma}_{i}^{\tau}$
involves the mathematically indeterminate expression $0/0$.

Fortunately, we don't need to predict $\sigma_{i}$ if $x_{k}$ is
not required to perfectly predict observation $i$, i.e. if a modified
regression that omits explanatory variable $x_{k}$ also has full
leverage for observation $i$. We then have $\tilde{x}_{k,i}=0$ and
also a zero partial leverage $\tilde{h}_{k,i}=0$. This means $\sigma_{i}$
does not affect the true variance of $\hat{\beta}_{k}$ and the general
variance estimator $\hat{V}_{k}^{\tau}$ specified in (\ref{eq:V_beta_k})
does not depend on $\hat{\sigma}_{i}^{\tau}$.

In contrast, if observation $i$ has full leverage and a positive
partial leverage $\tilde{h}_{k,i}>0$, explanatory variable $x_{k}$
is essential to perfectly predict observation $i$: $\sigma_{i}$
then influences the actual variance of $\hat{\beta}_{k}$ and variance
estimators $\hat{V}_{k}^{\tau}$ depend on which value is assumed
for $\hat{\sigma}_{i}^{\tau}$. The larger the partial leverage $\tilde{h}_{k,i}$,
the more relevant becomes the assumed value of $\hat{\sigma}_{i}^{\tau}$
for the variance estimator.

There is not yet a consistent treatment of cases with full leverage
across statistical software packages. The \textit{sandwich }package
in R (Zeileis, 2004) returns an \texttt{NaN} value for HC2 to HC4
standard errors if an observation has full leverage.

A similar non-invertibility problem arises in the context of cluster-robust
standard errors. Pustejovsky and Tipton (2017) theoretically justify
using a generalized Moore-Penrose inverse, which Kolesar (2023) and
the corresponding R package \textit{dfadjust} adapt to HC2 heteroskedasticity-robust
standard errors. The Moore-Penrose inverse of $1-h_{i}$ for $h_{i}=1$
is zero, i.e. one sets $\hat{\sigma}_{i}^{\tau}=0$ if $i$ has full
leverage. Poetscher and Preinerstorfer (2023) adopt this approach for
HC3 and HC4 as well.

Stata's approach is not yet well-documented. But experiments and personal
communications suggest that for observations with full leverage, Stata
evaluates $\alpha_{i}^{\tau}\hat{\varepsilon}_{i}^{2}=0/0$ numerically
for HC3, which yields results determined by rounding errors. For HC2
computation, as well as HC2-BM computation using Stata 18's new option
\textit{vce(hc2, dfadjust),} Stata seems to set $\hat{\sigma}_{i}^{\tau}=0$
for observations with full leverage.

Being in line with existing conventions, we have set $\hat{\sigma}_{i}^{\tau}=0$
for observations with full leverage in our Monte Carlo simulations
reported in Section 1. It is important to note that the theoretical
justification, that under this approach HC2 is then an unbiased estimator
of the true variance, only applies for the case that every observation
with full leverage has zero partial leverage. Yet, in those cases
also any other finite value for $\hat{\sigma}_{i}$ yields the same
variance estimator. For cases where the choice of $\hat{\sigma}_{i}$
matters, setting $\hat{\sigma}_{i}=0$ actually seems a recipe for
underestimating standard errors.

\begin{table}
\caption{\label{tab:lev}Average excess and lack for different subsamples and
methods}

\medskip{}

\begin{centering}
\begin{tabular}{lcccccccc}
\hline
\hline
& \multicolumn{2}{c}{No full} & \multicolumn{2}{c}{Full leverage} & \multicolumn{4}{c}{Full leverage, part. leverage > 0} \\
& \multicolumn{2}{c}{leverage} & \multicolumn{2}{c}{part. leverage = 0} & \multicolumn{2}{c}{$\hat \sigma_i = 0$} & \multicolumn{2}{c}{$\hat \sigma_i = \hat \sigma$} \\
& (1) & (2) & (3) & (4) & (5) & (6) & (7) & (8) \\
& excess & lack & excess & lack & excess & lack & excess & lack \\

\hline
HC1
  & 1.62\% & 0.09\%
  & 0.76\% & 0.33\%
  & 24.33\% & 1.03\%
  &  & \\
HC2
  & 1.05\% & 0.11\%
  & 0.58\% & 0.18\%
  & 23.86\% & 1.8\%
  & 3.13\% & 1.8\% \\
HC3
  & 0.22\% & 1.12\%
  & 0.06\% & 1.84\%
  & 20.53\% & 2.12\%
  & 2.73\% & 2.12\% \\
HC4
  & 0.3\% & 0.83\%
  & 0.17\% & 0.87\%
  & 22.13\% & 1.95\%
  & 2.92\% & 1.95\% \\
HC2-BM
  & 0.12\% & 0.49\%
  & 0.05\% & 0.53\%
  & 21.1\% & 2.3\%
  & 2.33\% & 2.3\% \\
JK-H
  & 0.06\% & 0.63\%
  & 0.06\% & 0.44\%
  & 0.02\% & 4.06\%
  &  &  \\
HC1-PL
  & 0.3\% & 0.31\%
  & 0.12\% & 0.61\%
  & 0.21\% & 4.37\%
  &  & \\
HC2-PL
  & 0.11\% & 0.44\%
  & 0.06\% & 0.48\%
  & 0.12\% & 4.56\%
  & 0\% & 4.56\% \\
\hline
Test & \\
situations &
\multicolumn{2}{c}{2544} &
\multicolumn{2}{c}{682} &
\multicolumn{2}{c}{31} &
\multicolumn{2}{c}{31} \\

\hline
\end{tabular}
\par\end{centering}
\medskip{}

\textit{\small{}Note:}{\small{} The table shows average excess and
average lack. Columns (1) and (2) show results for all test situations
where no observation has full leverage, columns (3) and (4) for test
situations where some observations $i$ have full leverage but for
all these observations the partial leverage is approximately zero.
Columns (5) to (8) show results for test situations that have observations
$i$ with full leverage and positive partial leverage. Columns (5)
and (6) set the corresponding $\hat{\sigma}_{i}$ to zero, while columns
(7) and (8) set the corresponding $\hat{\sigma}_{i}$ to $\hat{\sigma}$
the homoskedastic estimator of the standard deviation of $\varepsilon_{i}$
(only computed for standard errors based on HC2-HC4).}{\small\par}
\end{table}

Table \ref{tab:lev} shows the performance of different inference
specifications for different subset of test situations. Columns (1)
and (2) show results for tests without full leverage: HC2-BM and HC2-PL
perform both very well. Columns (3) and (4) study regressions that
have observations with full leverage, but all of them have zero partial
leverage. Those cases constitute 95.6\% of all test situations with
full leverage. Again both HC2-BM and HC2-PL perform both very well.
Columns (5) and (6) show results for the very small number of 31 test
situations where observations with full leverage are accompanied by
positive partial leverages. In those cases, all specifications except
for JK-H, HC1-PL and HC2-PL exhibit extremely excessive rejection
rates, with average excess well above 20\%.

We suggest selecting a different value for $\hat{\sigma}_{i}^{\tau}$
than 0 for cases with full leverage: one natural candidate is the
homoskedastic standard error estimate $\text{\ensuremath{\hat{\sigma}}}$.
As shown in Columns (7) and (8), this adjustment reduces average excess
of rejection rates to more reasonable levels. Of course, we cannot
rule out the possibility that, in some real-world data-generating
processes, observations with full leverage systematically exhibit
higher standard errors $\sigma_{i}$ than the homoskedastic estimate
$\hat{\sigma}$. Nonetheless, setting $\hat{\sigma}_{i}^{\tau}=\hat{\sigma}$
always yields more conservative results compared to setting $\hat{\sigma}_{i}^{\tau}=0$.

Interestingly, these 31 particular test situations are more diverse
than one might suspect. For example, only in 19 cases $x_{k}$ is
the obvious dummy variable that is equal to $1$ only for the observation
with full leverage and thus perfectly predicts it. If the regressions
had no control variables that are correlated with $x_{k}$ such a
dummy variable would exhibit a very high partial leverage for the
observation $i$ that is perfectly predicted by $x_{k}$. Yet, with
control variables, the structure of the FWL residual $\tilde{x}_{k}$
may substantially differ from $x_{k}$ and partial leverages may be
far less concentrated. In our sample one such dummy variable even
has a partial leverage of only 0.6\% at the perfectly predicted observation
and and a large partial-leverage adjusted sample size of $\tilde{n}_{k}=93.5$.
For these test situations, empirical rejection rates remain close
to 5\% for both HC2-PL and HC2-BM, regardless of whether $\hat{\sigma}_{i}^{\tau}$
is set to $0$ or $\hat{\sigma}$.

However, most of the 31 test situations exhibit highly concentrated
partial leverages. Specifically, for 20 test situations, we observe
$\tilde{n}_{k}<2$. This explains the very conservative rejection
rates of HC1-PL and HC2-PL reported in columns (5) to (8). Those conservative
rejection rates reflect high uncertainty about the actual value of
the standard error. Technically, by fixing for observations with full
leverage $\hat{\sigma}_{i}^{\tau}$ to $0$ or $\hat{\sigma}$, the
actual variance of the standard error can be much smaller than if
$\hat{\sigma}_{i}^{\tau}$ is determined by the noisy OLS residual
$\hat{\varepsilon}_{i}$. It is nonetheless prudent to treat these
cases as if we are highly uncertain as we don't know how well the
fixed $\hat{\sigma}_{i}^{\tau}$ approximates the unknown true $\sigma_{i}$.
Also Hansen's jacknife estimator JK-H has very conservative rejection
rates in those test situations.

\section{Concluding Remarks}

We conclude briefly with our key recommendations and additional remarks.

Overall, we recommend using Hansen's (2024) jackknife estimator or
JK-H, or the novel HC2-PL specification for heteroskedasticity-robust
inference, with HC2-BM being in most cases an equally good alternative.
A limitation of the HC2-PL specification, unlike JK-H and HC2-BM,
is that it cannot be readily extended to hypothesis tests involving
multiple coefficients. Computing the jackknife estimator JK-H likely
requires a longer runtime compared to HC2-PL and HC2-BM; however,
for a single regression, this is typically not a concern on modern
computers.

We further recommend the following modification for all specifications
but JK-H. For observations $i$ that have full leverage, set $\hat{\sigma}_{i}^{\tau}$
to a more conservative value than 0, such as the homoskedastic error
estimate $\hat{\sigma}$. Additionally, software packages could report
the percentage of the standard error computed using the homoskedastic
error estimate by summing the partial leverages of all observations
$i$ with full leverage.

MacKinnon et al. (2023b) propose reporting various diagnostic statistics
to evaluate the robustness of inference, including the distribution
of partial leverages. As compact concentrations measures, one could
report the partial leverage-adjusted sample sizes, $\tilde{n}_{k}$,
potentially accompanied by a warning for very low values.

All methods can be easily adapted to compute confidence intervals.
Moreover, for methods that compute customized degrees of freedom $\nu_{k}^{\tau}$,
one could follow the suggestion of Imbens and Kolesar (2016), and
report adjusted standard errors by multiplying the original standard
errors with $q_{\nu_{k}^{\tau}}(0.975)/q_{n-K}(0.975)$ where $q_{\nu}$
is the quantile function of the t-distribution with $\nu$ degrees
of freedom.

In our Monte Carlo studies, HC1-PL was relatively close behind the
performance of HC2-PL and JK-H. The \textit{reghdfe} Stata command
(Correia, 2017) and the \textit{fixest} R package (Berge, 2018), which
are widely used for fixed effects regressions, provide heteroskedasticity-robust
standard errors based solely on HC1. This is likely because extending
the performance gains from fixed-effects absorption to the computation
of hat values, which are required for HC2-, HC3-, or HC4-based standard
errors, or to JK-H estimation, is non-trivial.\footnote{Personal communication and the inclusion of HC2 standard errors with
absorbed fixed effects in the \textit{areg} function of Stata 18 suggest
that StataCorp might have developed a yet unpublished, performant
method for this computation.} In contrast, absorbed fixed effects do not pose any challenges for
computing partial leverages and HC1 standard errors. Therefore, HC1-PL
might be a promising alternative specification for regressions with
absorbed fixed effects.

This paper does not study robust inference methods proposed by Cattaneo
et al. (2018) and by Poetscher \& Preinerstorfer (2023). I still need
to better understand those methods and also need to figure out whether
a fast implementation for our large scale Monte Carlo study is possible.
If not included in a future version of this paper, hopefully once
the \textit{repbox} toolbox is fully developed and well documented,
it will facilitate such studies by more skilled researchers.

Variants of JK-H and HC2-BM have also been developed for cluster-robust
inference. Also HC1-PL and HC2-PL can be readily extended to cluster-robust
inference. The partial leverage of cluster $g$ is defined as the
sum of the partial leverages of all observations within cluster $g$.
The inverse Herfindahl-Hirschman index is then computed by treating
each cluster as a single observation, yielding a partial-leverage-adjusted
number of clusters, $\tilde{G}_{k}$. A comprehensive Monte Carlo
assessment for cluster-robust inference is planned for a separate
paper.

\section*{Bibliography}
\begin{itemize}
\item Athey, S., Tibshirani, J., \& Wager, S. (2019). ``Generalized random
forests''. The Annals of Statistics, 47(2), 1148.
\item Bell, R. M., \& McCaffrey, D. F. (2002). ``Bias reduction in standard
errors for linear regression with multi-stage samples''. Survey Methodology,
28(2), 169-182.
\item Berge, L. (2018). ``Efficient estimation of maximum likelihood models
with multiple fixed-effects: the R package FENmlm''. CREA Discussion
Papers.
\item Cattaneo, M. D., M. Jansson, and W. K. Newey. 2018. ``Inference in
linear regression models with many covariates and heteroscedasticity''.
Journal of the American Statistical Association 113: 1350--1361.
\item Chesher, A., and I. Jewitt. 1987. ``The bias of a heteroskedasticity
consistent covariance matrix estimator''. Econometrica 55: 1217--1222.
\item Chesher, A., and G. Austin. 1991. ``The finite-sample distributions
of heteroskedasticity robust Wald statistics''. Journal of Econometrics
47: 153--173.
\item Correia, Sergio. 2017. ``Linear Models with High-Dimensional Fixed
Effects: An Efficient and Feasible Estimator''. Working Paper. http://scorreia.com/research/hdfe.pdf
\item Cribari-Neto, F. (2004). ``Asymptotic inference under heteroskedasticity
of unknown form''. Computational Statistics \& Data Analysis, 45(2),
215-233.
\item Christensen, Rune Haubo B (2018), ``Satterthwaite\textquoteright s
Method for Degrees of Freedom in Linear Mixed Models''. Notes for
the R package lmerTestR
\item Davidson, R., \& MacKinnon, J. G. (1993). ``Econometric Theory and
Methods''. Oxford University Press
\item Ding, P. (2021). ``The Frisch--Waugh--Lovell theorem for standard
errors''. Statistics \& Probability Letters, 168, 108945.
\item Eicker, Friedhelm (1967). ``Limit Theorems for Regression with Unequal
and Dependent Errors''. Proceedings of the Fifth Berkeley Symposium
on Mathematical Statistics and Probability. Vol. 5. pp. 59--82.
\item Hansen, B. (2022). ``Econometrics''. Princeton University Press.
\item Hansen, B. (2024). ``Jackknife standard errors for clustered regression.''
University of Wisconsin.
\item Huber, Peter J. (1967). \textquotedbl The behavior of maximum likelihood
estimates under nonstandard conditions\textquotedbl . Proceedings
of the Fifth Berkeley Symposium on Mathematical Statistics and Probability.
Vol. 5. pp. 221--233.
\item Imbens, G. W., \& Kolesar, M. (2016). ``Robust standard errors in
small samples: Some practical advice''. Review of Economics and Statistics,
98(4), 701-712.
\item Kolesar, M. (2023), ``Robust Standard Errors in Small Samples''.
Vignette of the R package dfadjust.
\item Long, J. S., \& Ervin, L. H. (2000). Using Heteroscedasticity Consistent
Standard Errors in the Linear Regression Model. The American Statistician,
54(3), 217--224.
\item Poetscher, B. M., \& Preinerstorfer, D. (2023). ``Valid Heteroskedasticity
Robust Testing''. Econometric Theory, 1--53.
\item Pustejovsky, J. E., \& Tipton, E. (2017). ``Small-Sample Methods
for Cluster-Robust Variance Estimation and Hypothesis Testing in Fixed
Effects Models''. Journal of Business \& Economic Statistics, 36(4),
672--683.
\item MacKinnon, James G.; White, Halbert (1985). ``Some Heteroskedastic-Consistent
Covariance Matrix Estimators with Improved Finite Sample Properties''.
Journal of Econometrics. 29 (3): 305--325.
\item MacKinnon, J. G., Nielsen, M., \& Webb, M. D. (2023a). ``Fast
and reliable jackknife and bootstrap methods for cluster-robust inference''.
Journal of Applied Econometrics, 38(5), 671-694.
\item MacKinnon, J. G., Nielsen, M., \& Webb, M. D. (2023b). ``Leverage,
influence, and the jackknife in clustered regression models: Reliable
inference using summclust''. The Stata Journal, 23(4), 942-982.
\item Roodman, D., Nielsen, M., MacKinnon, J. G., \& Webb, M. D. (2019).
``Fast and wild: Bootstrap inference in Stata using boottest''.
The Stata Journal, 19(1), 4-60.
\item Satterthwaite, F. E. (1946), \textquotedblleft An Approximate Distribution
of Estimates of Variance Components,\textquotedblright{} Biometrics
Bulletin 2, 110--114.
\item White, Halbert (1980). ``A Heteroskedasticity-Consistent Covariance
Matrix Estimator and a Direct Test for Heteroskedasticity''. Econometrica.
48 (4): 817--838.
\item Wu, C. F. J. (1986). ``Jackknife, bootstrap and other resampling
methods in regression analysis''. the Annals of Statistics, 14(4),
1261-1295.
\item Young, A. (2019). ``Channeling fisher: Randomization tests and the
statistical insignificance of seemingly significant experimental results''.
The Quarterly Journal of Economics, 134(2), 557-598.
\item Young, A. (2022). ``Consistency without inference: Instrumental variables
in practical application''. European Economic Review, 147, 104-112.
\item Zeileis, A. (2004). ``Econometric Computing with HC and HAC Covariance
Matrix Estimators''. Journal of Statistical Software, 11(10), 1--17.
\end{itemize}

\section*{Appendix A: Static code analysis of regression commands in Stata}

\setcounter{table}{0}
\renewcommand{\thetable}{A\arabic{table}}
\setcounter{figure}{0}
\renewcommand{\thefigure}{A\arabic{figure}}

We base our code analysis on \nstaticallart\ reproduction packages
containing Stata scripts from articles published in leading economic
journals. The sample is based on the reproduction packages downloaded
to generate my web app \textit{Finding Economic Articles with Data}
accessible under https://ejd.econ.mathematik.uni-ulm.de/. The coverage
for most journals is wide, but not perfect. Automated analysis of
the article pages to determine the download link of the reproduction
package and the subsequent download sometimes encountered problems.
Also, in order to reduce strain on computational resources and network
bandwidth, very large reproduction packages were often not downloaded.
The publication years of the corresponding articles range from 2005
to 2024. We utilize a custom-designed parser to systematically extract
information from all code lines in the Stata scripts, with particular
emphasis on lines containing regression commands.

Table A1 reports the frequency of the 10 most common regression commands
identified in our sample.

\begin{table}[H]
\centering{}\caption{Top 10 of most used Stata regression commands}
\medskip{}
\begin{tabular}{lcc}
\hline
Regression command & No. reproduction packages & No. code lines \\
\hline
regress & 3866 & 146847 \\
areg & 1416 & 41373 \\
xtreg & 859 & 19333 \\
reghdfe & 763 & 29921 \\
ivregress & 589 & 11975 \\
ivreg2 & 498 & 11016 \\
probit & 474 & 4913 \\
logit & 371 & 3810 \\
xtivreg2 & 213 & 5539 \\
dprobit & 165 & 2200 \\
\hline
\end{tabular}
\end{table}

We restrict our analysis to OLS regressions performed using one of
the following Stata commands: \textit{regress}, \textit{areg}, \textit{xtreg},
or \textit{reghdfe}. These commands appear in a total of \nstaticreg\ code
lines across \nstaticregart\ reproduction packages. Table A2 provides
a detailed breakdown of their distribution across journals.

\begin{table}[h]
\centering{}\caption{Numbers of reproduction packages and regression commands by journal}
\medskip{}
\begin{tabular}{lcc}
\hline
Journal & Reproduction packages & Regression commands \\
\hline
aer & 1085 & 57834 \\
aejapp & 552 & 37463 \\
aejpol & 503 & 31741 \\
restat & 497 & 25838 \\
pandp & 244 & 3026 \\
ms & 241 & 12600 \\
aejmac & 202 & 13052 \\
restud & 169 & 9822 \\
jpe & 152 & 9971 \\
jole & 143 & 9079 \\
jeea & 135 & 8069 \\
aejmic & 104 & 3004 \\
ecta & 92 & 5609 \\
jep & 86 & 2533 \\
aeri & 73 & 2749 \\
qje & 72 & 5305 \\
jaere & 70 & 2920 \\\hline
Total & 4420 & 240615 \\
\hline
\end{tabular}

\end{table}

The standard errors used in these regressions fall into three main
categories: \sharehet\% heteroskedasticity-robust, \sharecluster\%
cluster-robust, and \shareiid\% homoskedastic. Panel A of Table 1
reports the absolute numbers for each category.\footnote{For 791 command lines, the type of standard errors cannot be determined
through our static code analysis, as it depends on Stata macros that
are only resolved during runtime.}

\begin{table}[h]
\begin{centering}
\caption{Static Code Analysis: Frequency of Standard Errors}
\medskip{}
\par\end{centering}
\centering{}
\begin{tabular}{lcc}
\hline
\hline
& No. reproduction packages & No. code lines \\
\hline

\multicolumn{3}{l}{Panel A: Standard error category} \\
cluster & 3055 & 144882 \\
iid & 2937 & 51230 \\
robust & 1689 & 40571 \\
unknown & 47 & 791 \\
\\
\multicolumn{3}{l}{Panel B: Type of heteroskedasticity-robust standard error} \\
hc1 & 1660 & 39782 \\
bootstrap & 54 & 765 \\
hc3 & 3 & 24 \\

\hline
\end{tabular}
\end{table}

Panel B of Figure 1 illustrates the frequency of specific types of
heteroskedasticity-robust standard errors: \sharehcone\% of regression
commands use Stata's default robust standard error, HC1.

\section*{Appendix B: Selection of studied regressions and tested coefficients}

From the reproduction packages containing an OLS regression with robust
standard errors, we select the underlying regressions for the analysis
shown in Table 1 and the subsequent Monte Carlo studies based on the
following criteria:
\begin{itemize}
\item The size of the reproduction package's ZIP file is below 10 MB.
\item The regression command can be successfully run. The most common cause
of a run error is the absence of confidential or proprietary data
sets in the reproduction packages.
\item A successful run of an automatic translation of the regression to
R, based on the extracted information about the original regression.
The second run must yield the same results as the original Stata run,
except for small numerical discrepancies.
\item The entire reproduction, including the extraction and storage of regression
related information and the 2nd reproduction run in R for all regressions,
takes less than 15 minutes for the whole reproduction package. Reproduction
runs that take longer are currently canceled with a timeout.
\item From each reproduction package with regressions meeting these criteria,
at most four regressions are selected for the Monte Carlo simulations.
\item For each regression, $t$-tests are performed on at most 25 regression
coefficients. Each regression is estimated with a Least Squares Dummy
Variable (LSDV) specification, even if the original Stata command
uses absorbed fixed effects, to enable the computation of proper HC2,
HC3, and HC4 standard errors. $t$-tests are not performed for originally
absorbed fixed effects. Additionally, heuristics are used to identify
dummy variable sets that are not absorbed but look like fixed effects.
For these dummy variables no t-tests are performed either.
\end{itemize}
It is somewhat disappointing that we currently end up with only \nart\
reproduction packages containing at least one regression that satisfies
all the criteria above. While some issues, such as missing data in
a reproduction package, cannot be resolved, there remain additional
points of failure in the process of reproducing the original Stata
regression in R. Hopefully, some failures can be avoided in more robust
future versions of the repbox toolbox.

The number of test situations could be easily increased by selecting
more than the current maximum of four regressions from each reproduction
package. However, this approach risks skewing the results, as they
may become more strongly influenced by a smaller number of reproduction
packages that contain a large number of regressions.

\section*{Appendix C: Specifying DGPs for our Monte Carlo Studies}

Assume the model of the original regression $r$ estimated in the
reproduction package is given by
\begin{equation}
y^{r,o}=X\beta^{r,o}+\varepsilon^{r,o}.
\end{equation}
The corresponding Monte Carlo samples will be generated from the model:
\begin{equation}
y^{r}=X\beta^{r}+\varepsilon^{r}.
\end{equation}
The explanatory variables $X$ remain unchanged from the original
sample and we set $\beta^{r}=0$. To determine the distribution of
the error terms $\varepsilon^{r}$ for each Monte Carlo model, we
use the following general procedure:
\begin{enumerate}
\item For each original regression $r$, we specify a set of $C$ candidate
models $\mathcal{M}^{r,c}$ indexed by $c=1,...,C$ for the distribution
of the error term. We only consider candidate models with independently
distributed error terms for Monte Carlo sample $m$ satisfying $\varepsilon_{i}^{m,r,c}\sim N(0,\left(\sigma_{i}^{r,c}\right)^{2})$.
This means each candidate model is fully characterized by the specified
vector of standard errors $\sigma^{r,c}$ of the error term.
\item From each candidate model we draw $m=1,...,M$ samples of the error
term $\varepsilon^{m,r,c}$ and compute the corresponding OLS residuals
$\hat{\varepsilon}^{m,r,c}.$
\item We then compute for each candidate model a distance $d^{r,c}$ between
the original OLS residuals $\hat{\varepsilon}^{r,o}$ and the set
of Monte Carlo residuals $\{\hat{\varepsilon}^{m,r,c}\}_{m=1}^{M}$.
\item For each original regression $r$, we pick that candidate model for
the Monte Carlo DGP that has the lowest distance $d^{r,c}$.
\end{enumerate}
One candidate model assumes purely homoskedastic error terms, while
all other candidate models estimate error term standard errors $\sigma_{i}^{r,c}$
using a non-parametric FGLS specification based on random forests.
The dependent variable of the random forest is the absolute value
of the original OLS residuals, $\left|\hat{\varepsilon}_{i}^{r,o}\right|$.\footnote{Alternative specifications for the dependent variable would also be
reasonable, such as HC2- or HC3-adjusted residuals. Additionally,
observations with $h_{i}=1$, which have $\hat{\varepsilon}_{i}^{r,o}=0$,
could be excluded when training the random forests. However, due to
computational constraints, we have not systematically explored these
variations. Preliminary experiments suggest that these variations
have little impact on the main results.} The explanatory variables of the random forest are the same as the
explanatory variables in the original regression. Compared to a feasible
GLS specification based on a log-linear regression model, random forests
offer greater flexibility and are well-suited to capture non-linear
effects and interactions among explanatory variables in predicting
heteroskedasticity. Since random forest predictions are always computed
as averages of the dependent variable in the training data set, they
cannot produce negative standard errors.

Whether the random forest predicts larger or smaller degrees of heteroskedasticity
(measured by the variation in $\sigma_{i}$) depends on how the random
forest is trained and how predictions are performed. Ordered from
typically larger to smaller predicted heteroskedasticity, we consider
five different candidates: in-sample prediction, out-of-bag prediction,
out-of-bag prediction based on honest trees (see Athey et al., 2019),
an equally weighted linear combination of out-of-bag prediction for
honest trees and a homoskedastic model, and a purely homoskedastic
model.

No single approach provides the best fit for all original regression
specifications. One indicator of the approximation quality is the
similarity between the sample distribution of the original OLS residuals
and the distribution of the Monte Carlo OLS residuals. To evaluate
candidate models, we particularly focus on the kurtosis of the OLS
residuals: if the original error terms are i.i.d. and normally distributed,
the OLS residuals exhibit a kurtosis close to 3, whereas heteroskedasticity
generally increases the kurtosis.

More formally, let $\kappa^{m,r,c}$ denote the kurtosis of the OLS
residuals in Monte Carlo sample $m$ for candidate model $c$ of original
regression $r$. Let $\bar{\kappa}^{r,c}$ and $s_{\kappa}^{r,c}$
denote the corresponding mean and standard deviation of the kurtoses
across all $M$ Monte Carlo samples. A basic distance measure for
candidate model $c$ for original regression $r$ is the standardized
distance
\begin{equation}
\delta^{c,r}=\frac{\left|\bar{\kappa}^{r,c}-\kappa^{r,o}\right|}{s_{\kappa}^{r,c}}
\end{equation}
where $\kappa^{r,o}$ is the kurtosis of the original OLS residuals.
To some extent, this distance measure may favor data-generating processes
that produce OLS samples with high variability in kurtosis. To counteract
this effect, we also consider a modified distance measure:
\begin{equation}
\bar{\delta}^{c,r}=\frac{\left|\bar{\kappa}^{r,c}-\kappa^{r,o}\right|}{\bar{s}_{\kappa}^{r}}
\end{equation}
 where $\bar{s}_{\kappa}^{r}$ denotes the median of the kurtosis
standard deviations across all candidate models for the original regression
$r$. The final distance measure used is the average of these two
measures:
\begin{equation}
d^{c,r}=0.5\delta^{c,r}+0.5\bar{\delta}^{c,r}
\end{equation}
The chosen DGP for original regression $r$ is the candidate model
$c$ with the lowest distance $d^{c,r}$.

We do not base the selection of DGP on the standard deviations of
residuals. Instead, all candidate models are calibrated to produce
OLS residuals with standard deviations similar to those of the original
OLS residuals. This calibration is achieved by scaling the initially
obtained values of $\sigma^{r,c}$ by the ratio of the standard deviations
of the original residuals and the Monte Carlo residuals, $sd(\hat{\varepsilon}^{r,o})/sd(\hat{\varepsilon}^{m,r,c})$.

\section*{Appendix D: Wild Bootstrap Inference}

\setcounter{table}{0}
\renewcommand{\thetable}{D\arabic{table}}
\setcounter{figure}{0}
\renewcommand{\thefigure}{D\arabic{figure}}

Wild bootstrap techniques have gained a lot of attention in the context
of cluster robust inference, see e.g. Roodman et al. (2019) and MacKinnon
et al. (2023). For heteroskedasticity-robust inference they have been
proposed already by Wu (1986). We compute wild bootstrap p-values
for the t-test with null hypothesis $\beta_{k}=0$ as follows:
\begin{enumerate}
\item Estimate the restricted regression model under the null hypothesis
$\beta_{k}=0$ to obtain OLS residuals $e^{r}$ and predicted values
$\hat{y}^{r}$.\footnote{Notably, given the restriction $\beta_{k}=0$ the restricted OLS residuals
$e$ are simply the residuals of the OLS regression that leaves out
the regressor $x_{k}$. Comparing to the FWL representation in Section
2, we find $e^{r}=\tilde{y}_{k}$. There is also a variant of wild
bootstrap based on the unrestricted OLS residuals $\hat{\varepsilon}$.
To save computation time, we omit the analysis of unrestricted wild
bootstrap, as earlier studies have repeatedly shown that restricted
wild bootstrap performs better. }
\item Generate bootstrap error terms $\varepsilon^{b}=\sqrt{\alpha^{\theta}}e^{r}\cdot v^{b}$,
where $\alpha^{\theta}$ is an adjustment based on type $\theta\in\{\text{HC1,HC2,HC3}\}$
as specified in Section 3. The factor $v^{b}$ is an $n\times1$ vector
of random weights independently drawn from a Rademacher distribution
\begin{equation}
v_{i}^{b}=\begin{cases}
-1 & \text{with probability }0.5\\
1 & \text{with probability }0.5
\end{cases}
\end{equation}
\item Form a bootstrap sample $y^{b}=\hat{y}^{r}+\varepsilon^{b}$ and re-estimate
the model to obtain the OLS estimator $\hat{\beta}_{k}^{b}$ and a
corresponding variance estimator $\hat{V}_{k}^{b,\eta}$ of type $\eta\in\{\text{HC1,HC2,HC3}\}$.
We then compute the corresponding t-statistic for the null hypothesis
$\beta_{k}=0$:
\begin{equation}
t_{k}^{b,\eta}=\frac{\hat{\beta}_{k}^{b}}{\sqrt{\hat{V}_{k}^{b,\eta}}}
\end{equation}

\item Repeat steps 2 and for 3 for $B$ bootstrap replications to construct
the bootstrap distribution of the test statistic.
\item Calculate the bootstrap p-value as the proportion of bootstrap statistics
$t_{k}^{b,\eta}$ that are as extreme as or more extreme than the
test statistic $t_{k}^{\eta}$ from the original regression sample
(also computed using a standard error of type $\eta$):
\begin{equation}
\text{p-value}=\frac{1}{B}\sum_{b=1}^{B}I(|t_{k}^{b,\eta}|\geq|t_{k}^{\eta}|),\label{eq:p_wild}
\end{equation}
where $I(\cdot)$ is the indicator function.
\end{enumerate}
We compare 9 different specifications $\tau\in\{\text{WB-11,WB-21},...,\text{WB-33\}}$
of wild bootstrap p-values, one for each combination $(\theta,\eta)\in\{\text{HC1,HC2,HC3}\}\times\{\text{HC1,HC2,HC3}\}$.

Similar to the previous Monte Carlo analysis we evaluate for each
original regression $M=10000$ Monte Carlo samples and we draw $B$
separate wild bootstrap samples for each Monte Carlo sample $m$.
Ideally, we would prefer to set $B$ to a large value in order to
precisely estimate bootstrap p-values for each Monte Carlo sample.
However, this approach presents a practical challenge. While wild
bootstrap p-values can be computed significantly faster than those
based on the paired bootstrap, the combination of a large $M$ and
$B$, together with over a thousand test situations, renders the computational
burden of the Monte Carlo study infeasible within acceptable time
frames given the limitations of our hardware.

We first derive a theoretical result that suggests that the rejection
rate $\pi_{\tau,s}^{0.05}$ for wild bootstrap methods can already
be well approximated with a smaller number of bootstrap repetitions.
Let $p_{\tau,s\vert B}(m)$ denote the bootstrap p-value for Monte
Carlo sample $m$ computed with $B$ bootstrap repetitions and let
\begin{equation}
\pi_{\tau,s\vert B}^{0.05}=\frac{1}{M}\sum_{m=1}^{M}I(p_{\tau,s\vert B}(m)\leq0.05).
\end{equation}
 We define the corresponding limit for infinitely many wild-bootstrap
replications as
\begin{equation}
P_{\tau,s}(m)=\text{plim}_{B\rightarrow\infty}p_{\tau,s}(m\vert B)
\end{equation}
and by furthermore taking the limit of infinitely many Monte Carlo
replications, we define
\begin{equation}
\Pi_{\tau,s}^{0.05}=\text{plim}_{M\rightarrow\infty}\frac{1}{M}\sum_{m=1}^{M}I(P_{\tau,s}(m)\leq0.05).
\end{equation}

\begin{prop}
Assume p-values $P_{\tau,s}$ are standard uniformly distributed and
$B$ is chosen such that $0.05\cdot(B+1)$ is an integer. Then
\begin{equation}
\text{plim}_{M\rightarrow\infty}\pi_{\tau,s\vert B}^{0.05}=\Pi_{\tau,s}^{0.05}
\end{equation}
and
\begin{equation}
\text{Var}(\pi_{\tau,s\vert B}^{0.05})\leq\frac{1}{4M}.
\end{equation}
\end{prop}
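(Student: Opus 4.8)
The plan is to reduce the whole statement to a Binomial tail computation by conditioning on each Monte Carlo sample. For a fixed Monte Carlo sample $m$, the $B$ bootstrap indicators $I(|t_k^{b,\eta}|\ge|t_k^{\eta}|)$ appearing in (\ref{eq:p_wild}) are i.i.d.\ Bernoulli draws whose common success probability is, by construction, the limiting bootstrap p-value $P_{\tau,s}(m)=\text{plim}_{B\to\infty}p_{\tau,s}(m\vert B)$. Hence, conditional on sample $m$, the count $B\,p_{\tau,s\vert B}(m)$ is a $\text{Binomial}(B,P_{\tau,s}(m))$ random variable. Everything then comes down to a Binomial tail probability averaged over the distribution of $P_{\tau,s}(m)$, which is assumed to be standard uniform.

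First I would convert the rejection threshold exactly, which is where the integer assumption enters. Writing $q=0.05\,(B+1)$, an integer by hypothesis, the event $p_{\tau,s\vert B}(m)\le 0.05$ is equivalent to $B\,p_{\tau,s\vert B}(m)\le 0.05B=q-0.05$, and since the left-hand side is integer-valued this holds if and only if $B\,p_{\tau,s\vert B}(m)\le q-1$. This clean conversion of a fractional threshold into an integer one is precisely why the hypothesis on $B$ is imposed.

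Next, because the Monte Carlo samples are drawn independently, the indicators $I(p_{\tau,s\vert B}(m)\le 0.05)$ are i.i.d.\ across $m$, so the law of large numbers gives $\text{plim}_{M\to\infty}\pi_{\tau,s\vert B}^{0.05}=\Pr(p_{\tau,s\vert B}(m)\le 0.05)$. Conditioning on $P_{\tau,s}(m)$ and integrating over its assumed $U(0,1)$ law yields
\[
\Pr(p_{\tau,s\vert B}(m)\le 0.05)=\int_0^1\sum_{j=0}^{q-1}\binom{B}{j}P^{j}(1-P)^{B-j}\,dP.
\]
The key identity is the Beta integral $\int_0^1\binom{B}{j}P^{j}(1-P)^{B-j}\,dP=\frac{1}{B+1}$, valid for every $j$; summing the $q$ equal terms gives $\frac{q}{B+1}=0.05$. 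Since $P_{\tau,s}$ is $U(0,1)$, we also have $\Pi_{\tau,s}^{0.05}=\Pr(P_{\tau,s}(m)\le 0.05)=0.05$, which establishes the first claim.

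Finally, for the variance bound I would note that $\pi_{\tau,s\vert B}^{0.05}$ is the mean of $M$ i.i.d.\ Bernoulli indicators with some success probability $\rho\in[0,1]$ (in fact $\rho=0.05$ by the computation above), so $\text{Var}(\pi_{\tau,s\vert B}^{0.05})=\rho(1-\rho)/M\le 1/(4M)$, using $\rho(1-\rho)\le 1/4$. I expect the main obstacle to be the second step: getting the threshold conversion exactly right and then recognizing that the Beta integral makes every term contribute the same $\frac{1}{B+1}$, so that the $q$ terms telescope into exactly $0.05$. The independence arguments and the variance bound are routine once the Binomial conditioning structure is in place.
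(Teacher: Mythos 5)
Your proposal is correct and follows essentially the same route as the paper's own proof: conditioning on each Monte Carlo sample to get $B\,p_{\tau,s\vert B}(m)\sim\text{Binomial}(B,P_{\tau,s}(m))$, applying the Beta integral $\int_0^1\binom{B}{j}P^j(1-P)^{B-j}\,dP=\tfrac{1}{B+1}$ so that exactly $0.05(B+1)$ equally likely values fall at or below the threshold, then invoking the law of large numbers and the Bernoulli variance bound $\rho(1-\rho)\le\tfrac14$. The only cosmetic difference is that the paper first establishes that $K_m$ is marginally uniform on $\{0,1,\ldots,B\}$ and then counts values, whereas you integrate the binomial tail sum directly; the two computations are identical in substance.
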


\begin{proof}
First, note that under the assumption that the p-values $P_{\tau,s}(m)$
are independently and identically standard uniformly distributed,
we have:

\begin{equation}
\Pi_{\tau,s}^{0.05}=0.05.
\end{equation}

Next, consider the bootstrap p-values $p_{\tau,s\vert B}(m)$ computed
with $B$ bootstrap replications. For each Monte Carlo sample $m$,
the bootstrap p-value $p_{\tau,s\vert B}(m)$ can be viewed as:

\begin{equation}
p_{\tau,s\vert B}(m)=\frac{K_{m}}{B},\quad\text{where }K_{m}\sim\text{Binomial}\left(B,P_{\tau,s}(m)\right).
\end{equation}

Since we assume $P_{\tau,s}(m)\sim U[0,1]$, the unconditional probability
mass function of $K_{m}$ is:

\begin{equation}
\mathbb{P}(K_{m}=k)=\int_{0}^{1}\binom{B}{k}p^{k}(1-p)^{B-k}dp=\binom{B}{k}\cdot\text{Beta}(k+1,B-k+1)
\end{equation}

Using the relationship between the Beta and Gamma functions and the
definition of the binomial coefficient:

\begin{equation}
\binom{B}{k}=\frac{B!}{k!(B-k)!},\quad\text{and}\quad\text{Beta}(a,b)=\frac{\Gamma(a)\Gamma(b)}{\Gamma(a+b)},
\end{equation}

we simplify:

\begin{equation}
\binom{B}{k}\cdot\text{Beta}(k+1,B-k+1)=\frac{B!}{k!(B-k)!}\cdot\frac{k!(B-k)!}{(B+1)!}=\frac{B!}{(B+1)!}=\frac{1}{B+1}.
\end{equation}

Thus, we find that:

\begin{equation}
\mathbb{P}(K_{m}=k)=\frac{1}{B+1},
\end{equation}

which means $K_{m}$ is uniformly distributed over $\{0,1,\ldots,B\}$.
Consequently, $p_{\tau,s\vert B}(m)$ is uniformly distributed over
the $B+1$ values $\left\{ 0,\frac{1}{B},\frac{2}{B},\ldots,1\right\} $.
Since we assume $(B+1)\cdot0.05$ is an integer, we know that $p_{\tau,s\vert B}(m)\leq0.05$
if and only if $p_{\tau,s\vert B}(m)$ takes one of the $(B+1)\cdot0.05$
values $\{0,\frac{1}{B},...,\frac{(B+1)\cdot0.05-1}{B}\}$. Therefore

\begin{equation}
\mathbb{P}\left(p_{\tau,s\vert B}(m)\leq0.05\right)=\frac{(B+1)\cdot0.05}{B+1}=0.05=\Pi_{\tau,s}^{0.05}.
\end{equation}
For the following steps let us rename the indicator variable as following:
\begin{equation}
Z_{m}=I(P_{\tau,s}(m)\leq0.05)
\end{equation}

It follows from the Law of Large Numbers that

\begin{equation}
\text{plim}_{M\rightarrow\infty}\pi_{\tau,s\vert B}^{0.05}=\text{plim}_{M\rightarrow\infty}\frac{1}{M}\sum_{m=1}^{M}Z_{m}=\mathbb{E}(Z_{m})=\Pi_{\tau,s}^{0.05}.
\end{equation}

Finally, we note that the variance of a Bernoulli distributed random
variable can never exceed $\frac{1}{4}$ and thus

\begin{equation}
\text{Var}\left(\pi_{\tau,s\vert B}^{0.05}\right)=\frac{1}{M}\text{Var}(Z_{m})\leq\frac{1}{4M}.
\end{equation}
\end{proof}
Although the wild bootstrap p-values $P_{\tau,s}(m)$ are not exactly
uniformly distributed, Proposition 1 suggests that the rejection rates
$\Pi_{\tau,s}^{0.05}$ can be quite accurately estimated by $\pi_{\tau,s\vert B}^{0.05}$
even with a moderate number of bootstrap replications $B$ if we draw
a large number of Monte Carlo samples.

The Monte Carlo results shown in Figure \ref{fig:just_boot} use $B=99$
bootstrap replications and $M=10000$ Monte Carlo samples. Since the
computations remain time-intensive, we have reduced the number of
test situations to \ncoefboot\  by limiting the tests to a maximum
of three coefficients per original regression.

\begin{figure}
\begin{centering}
\includegraphics[scale=0.9]{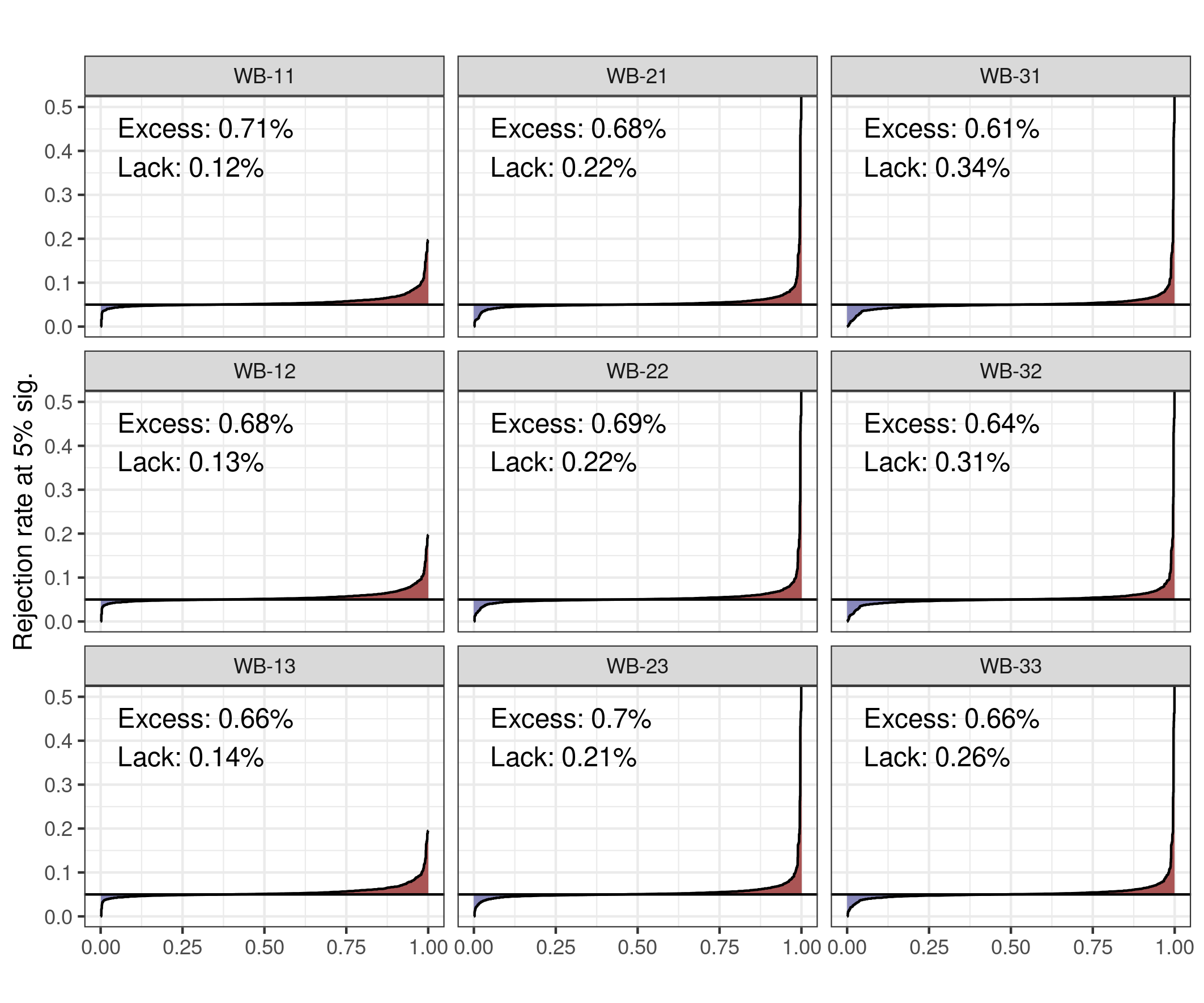}
\par\end{centering}
\caption{\label{fig:just_boot}Rejection rates for different wild bootstrap
specifications}
\end{figure}

Similar to the findings of MacKinnon et al. (2023) for the cluster-robust
wild bootstrap, the asymmetric specifications WB-31 and WB-13, which
utilize HC3 exclusively for the adjustment of the original OLS residuals
or solely for the computation of standard errors, respectively, tend
to exhibit superior performance. However, the performance differences
among the various wild bootstrap specifications are relatively minor.

Figure \ref{fig:boot} compares the rejection rates of the three bootstrap
specifications WB-11, WB-31, and WB-13 with the alternative specifications,
focusing on the \ncoefboot\  test situations for which bootstrap
standard errors are evaluated. While the wild bootstrap specifications
exhibit lower average excess in rejection rates than the HC1 and HC2
specifications, their average excess remains higher than that of HC3,
HC2-BM, JK-H, and HC2-PL.

\begin{figure}
\begin{centering}
\includegraphics[scale=0.9]{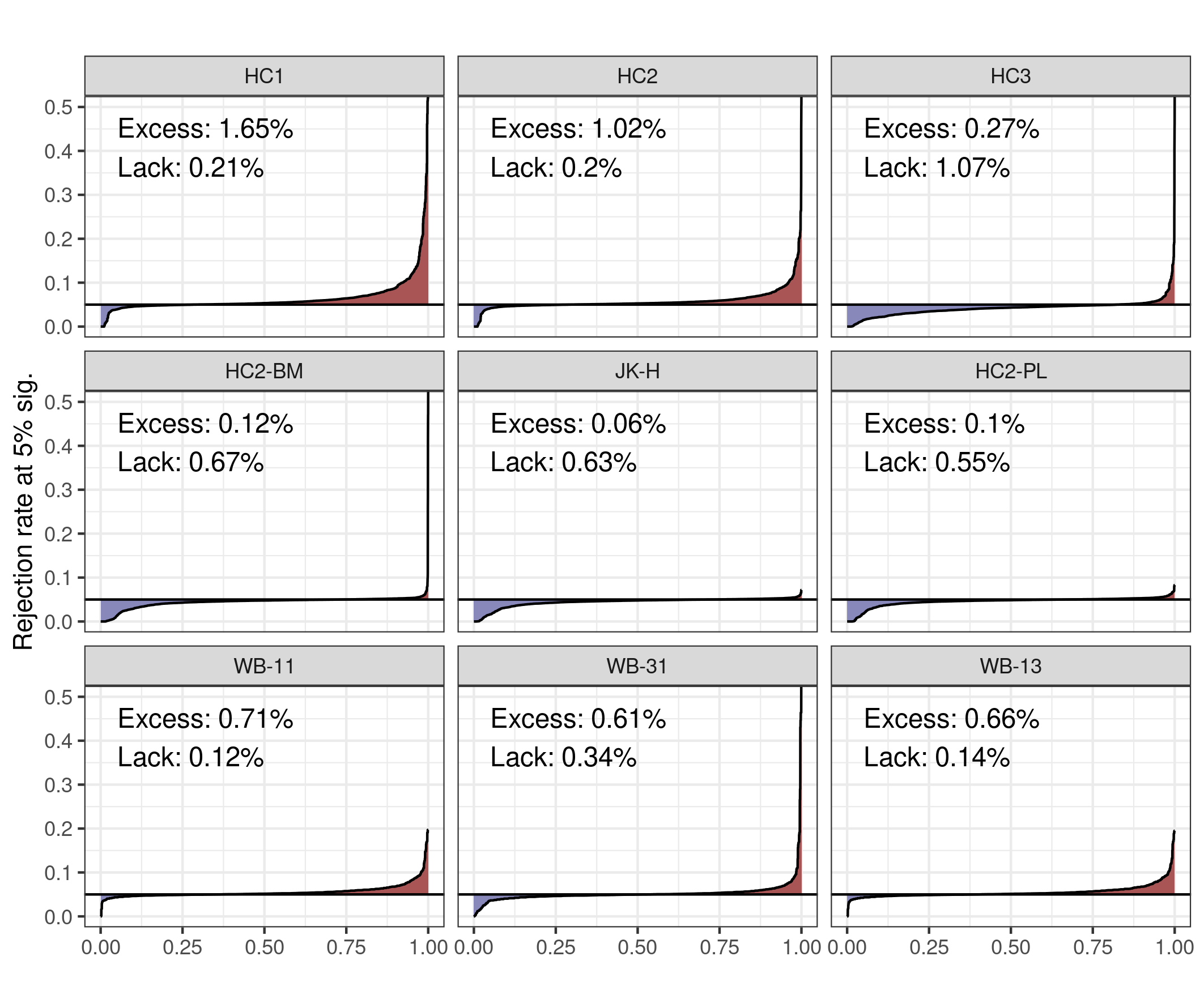}
\par\end{centering}
\caption{\label{fig:boot}Comparing rejection rates for wild bootstrap specifications
with non-bootstrap specifications}
\end{figure}

\pagebreak{}

\section*{Appendix E: Motivating HC1-PL and HC2-PL by a Satterthwaite Approximation}

\setcounter{table}{0}
\renewcommand{\thetable}{E\arabic{table}}
\setcounter{figure}{0}
\renewcommand{\thefigure}{E\arabic{figure}}

This appendix provides an alternative motivation of the proposed degree
of freedom adjustment of HC1-PL and HC2-PL based on $\tilde{n}_{k}$.
We start with the regression model
\begin{equation}
y=X\beta+\varepsilon
\end{equation}
with a deterministic regressor matrix $X$ and independently, normally
distributed, heteroskedastic errors

\begin{equation}
\varepsilon_{i}\sim N(0,\sigma_{i}^{2}).
\end{equation}

Let $\hat{V}_{k}^{\tau}$ be a variance estimator of the coefficient
$\hat{\beta}_{k}$ and consider the t-test for the null hypothesis
$\beta_{k}=0$ with test statistic
\begin{equation}
t_{k}=\frac{\hat{\beta_{k}}}{\sqrt{\hat{V}_{k}^{\tau}}}
\end{equation}

Following the approximation proposed by Satterthwaite (1946) $t_{k}$
follows approximately a t-distribution with $v_{k}$ degrees of freedom
satisfying (see e.g. Christensen, 2018 for a derivation):
\begin{equation}
\nu{}_{k}^{\tau}=\frac{2\text{Var}(\hat{\beta}_{k})^{2}}{\text{Var}\left(\hat{V}_{k}^{\tau}\right)}\label{eq:sw}
\end{equation}

We now show that the the partial-leverage adjusted sample size $\tilde{n}_{k}$
approximates $\nu{}_{k}^{HC0}$. Ding (2021) confirms that heteroskedasticity-robust
HC0 variances can be directly computed from the FWL representation
of the regression:
\begin{equation}
\tilde{y}_{k}=\beta_{k}\tilde{x}_{k}+\tilde{\varepsilon}_{k},
\end{equation}

yielding
\begin{equation}
\hat{V}_{k}^{HC0}=\frac{\sum_{i=1}^{n}\hat{\varepsilon}_{i}^{2}\tilde{x}_{k,i}^{2}}{\left(\sum_{i=1}^{n}\tilde{x}_{k,i}^{2}\right)^{2}}.
\end{equation}

Recall that the original OLS residuals $\hat{\varepsilon}_{i}$ are
the same as in the FWL specification. We first aim to find

\begin{equation}
\text{Var}(\hat{V}_{k}^{HC0})=\text{Var}\left(\frac{\sum_{i=1}^{n}\hat{\varepsilon}_{i}^{2}\tilde{x}_{i}^{2}}{\left(\sum_{i=1}^{n}\tilde{x}_{i}^{2}\right)^{2}}\right)=\frac{\text{Var}(\sum_{i=1}^{n}\hat{\varepsilon}_{i}^{2}\tilde{x}_{i}^{2})}{\left(\sum_{i=1}^{n}\tilde{x}_{i}^{2}\right)^{4}}.
\end{equation}

We approximate the variance of the sum in the numerator by the sum
of variances
\begin{equation}
\text{Var}(\sum_{i=1}^{n}\hat{\varepsilon}_{i}^{2}\tilde{x}_{k,i}^{2})\approx\sum_{i=1}^{n}\text{Var}\left(\hat{\varepsilon}_{i}^{2}\tilde{x}_{k,i}^{2}\right)=\sum_{i=1}^{n}\tilde{x}_{k,i}^{4}\text{Var}\left(\hat{\varepsilon}_{i}^{2}\right).
\end{equation}

If the formula were based on the independently distributed $\varepsilon_{i}$
instead of $\hat{\varepsilon}_{i}$, the approximation would be exact.
Yet, OLS residuals $\hat{\varepsilon}_{i}$ can in general be correlated
with each other. However, under the common assumptions for a consistent
OLS estimator, particularly strong exogeneity, $E(\varepsilon\mid X)=0$,
the OLS estimator $\hat{\beta}$ converges to $\beta$ as the sample
size increases and for $\hat{\beta}=\beta$ we have $\hat{\varepsilon}=\varepsilon$.

We now further approximate

\begin{equation}
\text{Var}\left(\hat{\varepsilon}_{i}^{2}\right)\approx\text{Var}\left(\varepsilon_{i}^{2}\right)=\mathbb{E}\left[\varepsilon_{i}^{4}\right]-\left(\mathbb{E}\left[\varepsilon_{i}{}^{2}\right]\right)^{2}=3\sigma_{i}^{4}-\sigma_{i}^{4}=2\sigma_{i}^{4}
\end{equation}

where we use the fact that normally distributed errors $\varepsilon_{i}$
satisfy

\begin{equation}
\ensuremath{\mathbb{E}\left[\varepsilon_{i}^{4}\right]=3\sigma_{i}^{4}}
\end{equation}

In a further simplification, we follow Bell and McCaffrey (2002) and
evaluate the resulting expressions for the case of homoskedasticity
\[
\sigma_{i}=\sigma\ \forall i=1,...,N
\]

We then find

\[
\text{Var}(\hat{V}_{k}^{HC0})\approx2\sigma^{4}\frac{\sum_{i=1}^{n}\tilde{x}_{k,i}^{4}}{\left(\sum_{i=1}^{n}\tilde{x}_{k,i}^{2}\right)^{4}}.
\]
The numerator of (\ref{eq:sw}) under homoskedasticity is given by

\[
2\text{Var}(\hat{\beta}_{k})^{2}=2\sigma^{4}\frac{\left(\sum_{i=1}^{n}\tilde{x}_{k,i}^{2}\right)^{2}}{\left(\sum_{i=1}^{n}\tilde{x}_{k,i}^{2}\right)^{4}}.
\]

We thus can approximate the degrees of freedom as
\[
\nu_{k}^{HC0}=\frac{2\text{Var}(\hat{\beta}_{k})^{2}}{\text{Var}\left(\hat{V}_{k}^{HC0}\right)}\approx\frac{\left(\sum_{i=1}^{n}\tilde{x}_{k,i}^{2}\right)^{2}}{\sum_{i=1}^{n}\tilde{x}_{k,i}^{4}}=\tilde{n}_{k}
\]

It is clear that this derivation involves several approximations,
and we would not propose our degree of freedom adjustment solely based
on this result, especially since we suggest using $\tilde{n}_{k}-1$
degrees of freedom instead of $\tilde{n}_{k}$. Nonetheless, the derivation
provides additional insight into why our proposal may be a reasonable
choice.

The main differences between this derivation and that of Bell and
McCaffrey (2002) are as follows: First, their computation of degrees
of freedom is based on an HC2 correction. Second, they do not use
the Frisch-Waugh-Lovell (FWL) representation as the starting point
for their approximation.
\end{document}